\renewcommand{\part}[1]{\left(#1\right)}  
\newcommand{\ket}[1]{\vert {#1} \rangle}  
\newcommand{\nl}{\par\noindent}   
\newtheorem{theorem}{Theorem}[section]
\newtheorem{lemma}{Lemma}[section]
\newtheorem{definition}{Definition}[section]
\numberwithin{equation}{section}
\begin{document}
   \title[Epistemic
    semantics  ]
 {\textbf{A quantum computational semantics for
    epistemic logical operators. \nl Part II: Semantics
}}

\author[Beltrametti]{Enrico Beltrametti}
\address[Beltrametti]{Dipartimento di Fisica,
Universit\`a di Genova,
  Via Dodecaneso, 33,
I-16146 Genova, Italy} \email{enrico.beltrametti@ge.infn.it}

\author[Dalla Chiara]{Maria Luisa Dalla Chiara}
\address[Dalla Chiara]{Dipartimento di Filosofia,
    Universit\`a di Firenze,
    Via Bolognese 52, I-50139 Firenze, Italy}
\email{dallachiara@unifi.it}

\author[Giuntini]{Roberto Giuntini}
\address[R.~Giuntini]{Dipartimento di Filosofia e Teoria
 delle Scienze Umane,
    Universit\`a di Cagliari,
    Via Is Mirrionis 1, I-09123 Cagliari, Italy.}
    \email{giuntini@unica.it}


\author[Leporini]{Roberto Leporini}
\address[R. Leporini]{Dipartimento di Ingegneria,
    Universit\`a di Bergamo,
    viale Marconi 5, I-24044 Dalmine (BG), Italy.}
\email{roberto.leporini@unibg.it}

\author[Sergioli]{Giuseppe Sergioli}
\address[G.~Sergioli]{Dipartimento di Filosofia e Teoria
 delle Scienze Umane,
    Universit\`a di Cagliari,
    Via Is Mirrionis 1, I-09123 Cagliari, Italy.}
    \email{giuseppe.sergioli@gmail.com}


\maketitle

\begin{abstract}By using the abstract structures investigated in
the first Part of  this article, we develop a semantics for  an
epistemic language, which  expresses sentences like ``Alice knows
that Bob does not understand that $\pi$ is irrational''. One is
dealing with a {\em holistic\/} form of  quantum computational
semantics, where entanglement plays a fundamental role; thus,  the
{\em meaning\/} of a global expression determines the {\em
contextual meanings\/} of its parts, but generally not the other
way around. The epistemic situations represented in this semantics
seem to reflect some characteristic limitations of the real
processes of acquiring information. Since knowledge  is not
generally closed under logical consequence, the unpleasant
phenomenon of {\em logical omniscience\/} is here avoided.

\end{abstract}

\section{The epistemic quantum computational syntax}
The  structures, investigated in the first Part  of this article,
provide the mathematical basis for the development of our
epistemic semantics. The basic intuitive idea can be sketched as
follows: pieces of quantum information (qumixes) can be denoted by
the sentences of a formal language, whose logical connectives
correspond to some quantum  gates, while the two fundamental
epistemic operators ({\em to understand\/} and  {\em to know\/})
are interpreted  as epistemic operations living in semantic models
based on convenient epistemic quantum computational structures.
Accordingly, sentences like ``At time $t$ Alice knows that Bob
does not understand that $\pi$ is irrational'' turn out to denote
particular examples  of qumixes, representing possible states of
quantum objects.
Let us first introduce the epistemic language that will be used.
This language, indicated by $\mathcal L^{EpQC}$, contains atomic
sentences (say, ``the spin-value in the $x$-direction is up''),
including two privileged sentences $\bf t$ and $\bf f$ that
represent the truth-values {\em Truth\/} and {\em Falsity\/},
respectively. We will  use $\bf q, \bf r,\ldots$ as metavariables
for atomic sentences, and $\alpha$, $\beta$, ... as metavariables
for sentences. The quantum computational connectives of $\mathcal
L^{EpQC}$ are: the negation $\lnot$ (which corresponds to the gate
{\em Negation\/}), the squareroot of the identity $\sqrt{id}$
(which corresponds to {\em Hadamard\/}), a ternary connective
$\intercal$ (which corresponds to the  {\em Toffoli\/}-gate), the
exclusive disjunction $\biguplus$ (which corresponds to ${\tt
XOR}$). The epistemic sub-language of $\mathcal L^{EpQC}$
contains: a finite set of names for epistemic agents ($\mathbf a$,
$\mathbf b$, ...); a set of names ($t_1$, $t_2$, ...) for the
elements of a given time-sequence; the epistemic operators
$\mathcal U$ (to understand) and $\mathcal K$ (to know).

 For any sentences
$\alpha$, $\beta$, $\gamma$, the expressions $\lnot \alpha$,
$\sqrt{id}\, \alpha$, $\intercal(\alpha, \beta,\, \gamma)$,
$\alpha\biguplus \beta$ are sentences. For any sentence $\alpha$,
for any agent-name $\bf a$ (say, Alice), for any time-name $t$,
the expressions $\mathcal U\mathbf {a}_t\alpha$ ({\em at time $t$
agent $\mathbf a$ understands the sentence $\alpha$}) and
$\mathcal K\mathbf {a}_t\alpha$ ({\em at time $t$ agent $\mathbf
a$ knows the sentence $\alpha$}) are sentences. Notice that nested
epistemic operators are allowed: an expression like $\mathcal
K\mathbf a_t\lnot \mathcal U\mathbf b_t \alpha$ is a well-formed
sentence.
 The
connectives $\lnot$, $\sqrt{id}$, $\intercal$, $\biguplus$ are
called {\em gate-connectives\/}. Any subexpression $\mathcal
U\mathbf a_ t$ or $\mathcal K\mathbf a_t$ of an  epistemic
sentence will be  called an {\em epistemic connective}.

We recall that, for any truth-perspective $\mathfrak T$, the
Toffoli-gate permits one to define a reversible conjunction ${\tt
AND}_\mathfrak T$ (for any  $\rho \in \mathfrak D(\mathcal
H^{(m)})$ and for any $\sigma \in \mathfrak D(\mathcal H^{(n)})$):
$${\tt AND}_\mathfrak T(\rho,\sigma):= \,\,
^\mathfrak D{\tt T}_\mathfrak T^{(m,n,1)} (\rho \otimes \sigma
\,\otimes \,^\mathfrak TP_0^{(1)}).   $$ Accordingly, from a
syntactical point of view, it is reasonable to define
(metalinguistically)  the logical conjunction $\land$ as follows
(for any sentences $\alpha$ and $\beta$):
$$ \alpha \land \beta := \intercal(\alpha, \beta, \mathbf f).  $$

 We will now  introduce some syntactical
 notions that will be used in our semantics.

 \begin{definition}\label{de:gatesent}\nl

 \begin{itemize}
 \item $\alpha$ is called a gate-sentence iff either $\alpha$
     is atomic or the principal connective of  $\alpha$ is a
     gate-connective.
 \item $\alpha$ is called an epistemic sentence iff $\alpha$
     has the form $\mathcal K\mathbf {a}_t\beta$.

 \end{itemize}

 \end{definition}

 \begin{definition} {\em (The atomic complexity of a sentence)}  \label{de:atcomp}\nl
 The atomic complexity $At(\alpha)$ of a sentence $\alpha$ is
  the number of occurrences of atomic sentences in
 $\alpha$.
 \end{definition}

 For instance, $At(\intercal(\mathbf q, \mathbf q,\mathbf f)) = 3$.
 We will also indicate by $\alpha^{(n)}$ a sentence whose atomic
 complexity is $n$. The notion of atomic complexity plays an
 important semantic  role. As we will see, the meaning of any sentence whose
 atomic complexity is $n$ shall live in the domain
 $\mathfrak D(\mathcal H^{(n)})$. For this reason,
 $\mathcal H^{(At(\alpha))}$ (briefly indicated by $\mathcal
 H^\alpha$) will be also called the {\em semantic space\/} of $\alpha$.

 Any sentence
$\alpha$ can be naturally decomposed into its parts, giving rise
to a special configuration called the {\em syntactical tree\/} of
$\alpha$ (indicated by {$STree^{\alpha}$}).

Roughly, $STree^{\alpha}$ can be represented as a finite sequence
of {\em levels\/}:
$$Level_k(\alpha)$$
$$\vdots$$
$$Level_1(\alpha),$$
where:

\begin{itemize}
\item each $Level_i(\alpha)$ (with $1 \le i \le k$) is a
    sequence   $(\beta_1, \ldots,  \beta_m)$  of subformulas
    of $\alpha$;
    \item the {\em bottom level\/} $Level_1(\alpha)$ is
        $\alpha$; \item the {\em top level\/ }
        $Level_k(\alpha)$ is the sequence $(\mathbf q_1,
        \ldots, \mathbf q_r)$, where $\mathbf q_1, \ldots
        \mathbf q_r$ are the atomic occurrences in $\alpha$;
        \item for any $i$ (with $1 \le i < k$),
            $Level_{i+1}(\alpha)$ is the  sequence obtained by
            dropping the {\em principal gate-connective\/} in
            all molecular gate-sentences  occurring at
            $Level_i(\alpha)$, by dropping the epistemic
            connectives ($\mathcal U\mathbf {a}_t$, $\mathcal
            K\mathbf {a}_t$) in all epistemic sentences
            occurring at $Level_i(\alpha)$ and by repeating
            all the atomic sentences that  occur at
            $Level_i(\alpha)$.
\end{itemize}

By {\em Height\/} of $\alpha$ (indicated by ${Height(\alpha)}$) we
mean the number of levels of the syntactical tree of $\alpha$.

As an example, consider the following sentence: $$\alpha= \mathcal
K\mathbf{a}_t\lnot(\mathbf q \land \lnot \mathbf q) = \mathcal K
\mathbf{a}_t\lnot(\intercal(\mathbf q,\lnot \mathbf q,\mathbf
f))$$ (say, ``At time $t$ Alice knows the non-contradiction
principle'', instantiated by means of the atomic sentence $\bf
q$).

The syntactical tree of $\alpha$ is the following sequence of
levels:
\begin{align*}
Level_5(\alpha) &= (\mathbf q, \mathbf q, \mathbf f)\\
Level_4(\alpha) &= (\mathbf q,\lnot \mathbf q;\mathbf f)\\
Level_3(\alpha) &=
(\intercal(\mathbf q,\lnot \mathbf q,\mathbf f))\\
Level_2(\alpha) &=
(\lnot\intercal(\mathbf q,\lnot \mathbf q,\mathbf f))\\
Level_1(\alpha) &= (\mathcal K\mathbf{a}_t
\lnot(\intercal(\mathbf q,\lnot \mathbf q,\mathbf f)))\\
\end{align*}
Clearly, $Height(\intercal(\mathbf q, \lnot \mathbf q, \mathbf
f))= 5.$

More precisely, the syntactical tree of a sentence (whose atomic
complexity is $r$) is defined as follows.

\begin{definition} {\em (The syntactical tree of $\alpha$)} \label{de:sintree}\nl
The syntactical tree of $\alpha$ is the following sequence of
sentence-sequences:
$$STree^\alpha  = (Level_1(\alpha),\ldots, Level_k(\alpha)), $$
where:
\begin{itemize}

\item $Level_1(\alpha)= \alpha$;

\item $Level_{i+1}$ is defined as follows for any $i$ such
    that $1 \le i < k$. The following cases are possible:
\begin{enumerate} \item $Level_i(\alpha)$ does not contain any
 connective. Hence, $Level_i(\alpha)= (\mathbf q_1,
\ldots,  \mathbf q_r)$ and $Height(\alpha) = i$;
\item $Level_i(\alpha)=   (\beta_1, \ldots, \beta_m)$, and
    for at least one $j$,  $\beta_j$ has a (principal)
    connective. Consider the following sequence of
    sentence-sequences:
$$\mathcal s^\prime_1, \ldots, \mathcal s^\prime_m,  $$ where:
$ \mathcal s^\prime_h = \begin{cases}  (\beta_h),\,
 \text{if} \,\,\beta_h \,\,\text{is atomic};  \\
                     \mathcal s_h^*, \,\,\text{otherwise.}

                     \text{Where: }\nl

                     \mathcal s_h^*= \begin{cases}
                     (\delta),\, \text{if}\,\, \beta_h =
                     \lnot\delta\,\,
                      \text{or} \,\,
                     \beta_h = \sqrt{id}\,\,\delta;  \\
                     (\gamma, \delta, \theta),
                      \,\,\text{if} \,\, \beta_h =
                      \intercal(\gamma,\delta,\theta);\\
(\gamma,\delta),\,\,\text{if} \,\, \beta_h = \gamma
                      \biguplus \delta;\\
(\delta),\, \text{if}\,\, \beta_h = \mathcal U\mathbf
{a}_t\delta \,\,\text{or}\,\,  \beta_h = \mathcal K\mathbf
{a}_t\delta.
                     \end{cases}

                     \end{cases}$

Then, $$Level_{i+1}(\alpha) =  \mathcal s^\prime_1 \bullet
\ldots \bullet \mathcal s^\prime_m,$$ where $\bullet$
        represents
the sequence-composition.
\end{enumerate}
\end{itemize}

\end{definition}

\section{The epistemic quantum computational semantics}
 We will now give the basic definitions of our  semantics.
We will apply to epistemic situations  a  {\em holistic\/} version
of quantum computational semantics (which has been naturally
inspired by the characteristic holistic features of the quantum
theoretic formalism\footnote{See \cite{DFGLLS} and \cite{DBGS}.}).
In this semantics any model assigns to any sentence a {\em global
meaning\/} that determines the contextual meanings of all its
parts (from the whole to the parts!). It may happen that one and
the same sentence receives different meanings in different
contexts.


Before defining  holistic  models, we will first introduce the
weaker notion of {\em quasi-model\/} of the language $\mathcal
L^{EpQC}$.

\begin{definition}{\em (Quasi-model)}\nl
 A {\em quasi-model\/} of the language $\mathcal L^{EpQC}$ is a system

 $$\mathcal M^q  = (T,\,Ag,\, \mathbf {EpSit}, \,den)$$
 where:
 \begin{enumerate}
 \item $(T,\,Ag,\, \mathbf {EpSit})$ is an epistemic quantum
     computational structure\footnote{See Section 3 of the
     first Part of this article.};
 \item $den$ is a function that interprets the individual
     names of the language. By simplicity, we put:
     $$den({\bf a})= \frak a;  \,\,\, den(t)= \frak t.   $$

  \end{enumerate}
  \end{definition}

  Apparently, quasi-models represent partial interpretations of the
  language: while names of times and of agents  receive an
  interpretation in the framework of a given epistemic situation,
  meanings of sentences are not determined.

  In the first Part of this article we have seen that knowledge
  operations cannot be generally represented as qumix gates. At
  the same time, once fixed an epistemic quantum computational
  structure $\mathcal S  = (T,\,Ag,\, \mathbf {EpSit})$, one can
  naturally define the following notion of {\em pseudo-gate\/}
  with respect to $\mathcal S$.

  \begin{definition} {\em (Pseudo-gate)}\label{de:pseudo}\nl Let
$\mathcal S  = (T,\,Ag,\, \mathbf {EpSit})$ be an epistemic
quantum computational structure. A {\em pseudo-gate\/} of
$\mathcal S$ is a operator-product
$$\mathbf X_1^{(n_1)} \otimes \ldots \otimes  \mathbf X_m^{(n_m)}, $$
where any $\mathbf X_i^{(n_i)}$ (with $1 \le i \le m$) is either a
qumix-gate $^\mathfrak D G_\mathfrak T^{(n_i)}$ with respect to a
 truth-perspective $\mathfrak T$ or an epistemic operation
 ($\mathbf U_{\mathfrak {a_t}}^{(n_i)}$ or
 $\mathbf K_{\mathfrak {a_t}}^{(n_i)}$) of $\mathcal S$.

  \end{definition}

  One can show that for any choice of a truth-perspective  $\frak T$  and of a
  quasi-model
  $\mathcal M^q  = (T,\,Ag,\, \mathbf {EpSit}, \,den)$, the
  syntactical tree of a sentence $\alpha$ uniquely determines a
  sequence of pseudo-gates, that will be called the
  {\em ($\mathfrak T,\mathcal M^q$)-pseudo-gate tree\/} of $\alpha$.

As an example, consider again the sentence
$$\alpha= \mathcal
K\bf{a_t}\lnot(\mathbf q \land \lnot \mathbf q) = \mathcal K
\bf{a_t}\lnot(\intercal(\mathbf q,\lnot \mathbf q,\mathbf f))$$
 and its syntactical tree.

  Apparently, $Level_4(\alpha)$ is
obtained from $Level_5(\alpha)$ by repeating the first occurrence
of $\mathbf q$, by negating the second occurrence of $\mathbf q$
and by repeating $\mathbf f$. Hence the pseudo-gate that
corresponds to $Level_4(\alpha)$ will  be $^\mathfrak D{\tt
I}^{(1)} \otimes\, ^\mathfrak D{\tt NOT}^{(1)}_\mathfrak T \otimes
\,^\mathfrak D{\tt I}^{(1)}$. $Level_3(\alpha)$ is obtained from
$Level_4(\alpha)$ by applying to the three sentences occurring at
$Level_4(\alpha)$ the connective $\intercal$. Hence the
pseudo-gate that corresponds to $Level_3(\alpha)$ will  be
$^\mathfrak D{\tt T}^{(1,1,1)}_{\mathfrak T}$. $Level_2(\alpha)$
is obtained from $Level_3(\alpha)$ by applying to the  sentence
occurring at $Level_3(\alpha)$ the connective $\lnot$. Hence the
pseudo-gate that corresponds to $Level_2(\alpha)$ will  be
$^\mathfrak D{\tt NOT}^{(3)}_\mathfrak T$. Finally,
$Level_1(\alpha)$ is obtained from $Level_2(\alpha)$ by applying
to the  sentence occurring at $Level_2(\alpha)$ the epistemic
connective $\mathcal K\mathbf a_t$. Hence the pseudo-gate that
corresponds to $Level_1(\alpha)$ will  be $\mathbf
K^{(3)}_{\mathfrak{a_t}}$.

On this basis, the {\em ($\mathfrak T,\mathcal M^q$)-pseudo-gate
tree\/}
 of the sentence
 $$\alpha= \mathcal
K\bf{a_t}\lnot(\mathbf q \land \lnot \mathbf q) = \mathcal K
\bf{a_t}\lnot(\intercal(\mathbf q,\lnot \mathbf q,\mathbf f))$$
  can be identified with  the following sequence consisting of
  four
pseudo-gates:
$$\left(^\mathfrak D{\tt I}^{(1)} \otimes\,
^\mathfrak D{\tt NOT}^{(1)}_\mathfrak T
\otimes \,^\mathfrak D{\tt I}^{(1)},\,\,\,
^\mathfrak D{\tt T}_\mathfrak T^{(1,1,1)}, \,\,\,
^\mathfrak D{\tt NOT}^{(3)}_\mathfrak T,\,\,\,
\mathbf K^{(3)}_{{\mathfrak a_t}}\right).$$

Notice that the  truth-perspectives $\mathfrak T$ and $\mathfrak
{T_{a_t}}$ may be different.

The general definition of {\em ($\mathfrak T,\mathcal
M^q$)-pseudo-gate tree\/}
 is the following:

\begin{definition} {\em ({\em ($\mathfrak T,\mathcal M^q$)-pseudo-gate
tree\/})} \label{de:qumtree} \nl Let $\alpha$ be a sentence such
that $Height(\alpha)= k$. The {\em ($\mathfrak T,\mathcal
M^q$)-pseudo-gate tree\/} of $\alpha$ is the sequence of
peudo-gates
$$ PsTree_\mathfrak T^\alpha =
 (^\alpha\mathbf O^{(k-1)}_\mathfrak T,\ldots,
 \, ^\alpha\mathbf O^{(1)}_\mathfrak T),  $$
that is defined as follows. Suppose that
$$Level_{i-1}(\alpha) = (\beta_1^{(r_1)}, \ldots,
\beta_m^{(r_m)}),$$ (where $1 < i \le k$). We put:
$$ ^\alpha\mathbf O^{(i-1)}_\mathfrak T=
\,^\alpha\mathbf X_{\mathfrak T}^{(r_1)} \otimes \ldots
 \otimes \, ^\alpha \mathbf X_{\mathfrak T}^{(r_m)},$$
where any $^\alpha\mathbf X_{\mathfrak T}^{(r_j)}$ is a
pseudo-gate defined on $\mathcal H^{(r_j)}$ such that:

$^\alpha\mathbf X_{\mathfrak T}^{(r_j)}= \begin{cases} ^\mathfrak
D{\tt I}^{(r_j)}, \,\, \text{if}\,\, \beta_j^{(r_j)}
\,\, \text{is atomic};\\

               ^\mathfrak D{\tt NOT}_\mathfrak T^{(r_j)}, \,\, \text{if}
               \,\,\beta_j^{(r_j)}= \lnot \delta;\\
               ^\mathfrak D\sqrt{{\tt I}}_\mathfrak T^{(r_j)}, \,\, \text{if}
               \,\,\beta_j^{(r_j)}= \sqrt{id}\,\, \delta;\\
               ^\mathfrak D{\tt T}_\mathfrak T^{(u,v,w)}, \,\, \text{if}
               \,\,\beta_j^{(r_j)}=
\intercal (\gamma^{(u)}, \delta^{(v)}, \theta^{(w)});\\
^\mathfrak D{\tt XOR}_\mathfrak T^{(u,v)}, \,\, \text{if}
               \,\,\beta_j^{(r_j)}=
\gamma^{(u)} \biguplus \delta^{(v)}; \\
\mathbf U^{(r_j)}_{\mathfrak{a_t}},\,\,\text{if}\,\,
\beta_j^{(r_j)}= \mathcal U\mathbf a_t\delta;\\
 \mathbf
K^{(r_j)}_{\mathfrak{a_t}},\,\,\text{if}\,\, \beta_j^{(r_j)}=
\mathcal K\mathbf a_t\delta.
               \end{cases}$

\end{definition}

Consider now a sentence  $\alpha$ and let $(^\alpha\mathbf
O^{(k-1)}_\mathfrak T,\ldots,
 \, ^\alpha\mathbf O^{(1)}_\mathfrak T)$ be the
($\mathfrak T,\mathcal M^q$)-pseudo-gate tree of $\alpha$. Any
choice of a qumix $\rho$ in $\mathcal H^\alpha$ determines a
sequence $(\rho_k,\ldots,\rho_1)$ of qumixes of $\mathcal
H^\alpha$, where:
$$\rho_k=\rho$$
$$\rho_{k-1}= \,\,^\alpha\mathbf O_{\mathfrak T}^{k-1}(\rho_k)$$ $$\vdots $$
 $$\rho_{1}=
\,\,^\alpha\mathbf O_{\mathfrak T}^{1}(\rho_2).$$ The qumix $\rho_k$ can
be regarded as a possible {\em input-information\/} concerning the
atomic parts of $\alpha$, while $\rho_1$  represents the {\em
output-information\/} about $\alpha$, given the input-information
$\rho_k$. Each $\rho_i$ corresponds to the {\em information} about
$Level_i(\alpha)$, given the input-information $\rho_k$.

How to determine an information about the parts of $\alpha$ under
a given input? It is natural to apply the  {\em reduced state
function\/} that  determines for any state $\rho$ of a composite
system $S= S_1 + \ldots + S_n$ the state
$Red^{i_1,\ldots,i_m}(\rho)$ of any subsystem $S_{i_1}+ \ldots +
S_{i_m}$ (where $1\le i_1 \le n, \ldots, 1\le i_m \le n $.)
Consider the syntactical tree of $\alpha$ and suppose that:
$$ Level_i(\alpha)= (\beta_{i_1}, \ldots, \beta_{i_r}).$$
We know that  the ($\mathfrak T,\mathcal M^q$)-pseudo-gate tree of
$\alpha$ and the choice of an input $\rho_k$ (in $\mathcal
H^\alpha$) determine a sequence of qumixes:
 $$\rho_k \leftrightsquigarrow Level_k(\alpha) = (\mathbf
q_1, \ldots, \mathbf q_t)$$ $$\vdots$$ $$ \rho_i
\leftrightsquigarrow Level_i(\alpha) = (\beta_{i_1}, \ldots,
\beta_{i_r}) $$ $$\vdots $$
$$ \rho_1 \leftrightsquigarrow Level_1(\alpha) = \alpha$$

We can consider ${ Red^j(\rho_i)}$, the {\em reduced information
of $\rho_i$ with respect to the $j$-th part}. From a semantic
point of view, this object can be regarded as a {\em contextual
information} about $\beta_{i_j}$ (the subformula of $\alpha$
occurring at the $j$-th position at $Level_i(\alpha)$) under the
input $\rho_k$.

We can now define the notion of {\em holistic model\/}, which
assigns meanings to all sentences of the language, for any choice
of a truth-perspective $\mathfrak T$.

\begin{definition}{\em (Holistic model)} \label{de:hol}\nl
A {\em holistic model} of the language $\mathcal L^{EpQC}$ is a
system
$$\mathcal M  = (T,\,Ag,\, \mathbf {EpSit}, \,den,
\, ^\mathcal M{\tt Hol})$$
 where:
 \begin{enumerate}
\item[(1)] $(T,\,Ag,\, \mathbf {EpSit}, \,den)$ is a
    quasi-model $\mathcal M^q$ of the language. \item [(2)]
    $^\mathcal M{\tt Hol}$ is a map that associates to any
    truth-perspective $\mathfrak T$ a map $^\mathcal M{\tt
    Hol}_\mathfrak T$  representing a {\em holistic
    interpretation\/}  of the sentences of the language. The
    following conditions are required.
    \begin{enumerate}
    \item[(2.1)] For any sentence $\alpha$, the
        interpretation $^\mathcal M{\tt Hol}_\mathfrak T$
        associates to each level of the syntactical tree
        of $\alpha$ a {\em meaning\/}, represented by a
        qumix living in $\mathcal H^\alpha$ (the semantic
        space of $\alpha$);
 \item[(2.2)] Let  $ (^\alpha\mathbf O^{(k-1)}_\mathfrak
     T,\ldots, \, ^\alpha\mathbf O^{(1)}_\mathfrak T)$ be
     the $(\mathfrak T,\mathcal M^q$)-pseudo-gate tree of
     $\alpha$ and let $1 \le i < Height(\alpha)$. Then,
     $$ ^\mathcal M{\tt Hol}_\mathfrak T(Level_{i}(\alpha))= \,\,
     ^\alpha\mathbf O^{(i)}_\mathfrak
     T(^\mathcal M{\tt
     Hol}_\mathfrak T(Level_{i+1}(\alpha))).$$ In other
 words the global meaning of each level (different from
 the top level) is obtained by applying the corresponding
 pseudo-gate to the meaning of the level that occurs
 immediately above.
 \item [(2.3)] Let $Level_i(\alpha) = (\beta_1,  \ldots,
     \beta_r).$ Then: \nl  $\beta_j = \mathbf f
     \Rightarrow Red^j(^\mathcal M{\tt Hol}_\mathfrak
     T(Level_{i}(\alpha)))= \,\,^\mathfrak TP^{(1)}_0;$
     \nl $\beta_j = \mathbf t \Rightarrow Red^j({^\mathcal
     M\tt Hol}_\mathfrak T(Level_{i}(\alpha)))=
     \,\,^\mathfrak TP^{(1)}_1$, for any $j$ $(1 \le j \le
     r)$.\nl In other words, the contextual meanings of
     $\mathbf f$ and of $\mathbf t$ are always the
     $\mathfrak T$-{\em Falsity\/} and the $\mathfrak
     T$-{\em Truth\/}, respectively.
\end{enumerate}
\end{enumerate}
\end{definition}

On this basis, we put:
$$^\mathcal M{\tt Hol}_\mathfrak T(\alpha):=
 \,\,^\mathcal M{\tt Hol}_\mathfrak T(Level_1(\alpha)), $$
 for any sentence $\alpha$.

As an example, consider again the sentence
$$\alpha= \mathcal
K\bf{a_t}\lnot(\mathbf q \land \lnot \mathbf q) = \mathcal K
\bf{a_t}\lnot(\intercal(\mathbf q,\lnot \mathbf q,\mathbf f)).$$
As we have seen, any $(\mathfrak T, \mathcal
M^q)$-pseudo-gate-tree of $\alpha$ will have the following form:
$$\left(^\mathfrak D{\tt I}^{(1)} \otimes\,
^\mathfrak D{\tt NOT}^{(1)}_\mathfrak T \otimes \,^\mathfrak D{\tt
I}^{(1)},\,\,\, ^\mathfrak D{\tt T}_\mathfrak T^{(1,1,1)}, \,\,\,
^\mathfrak D{\tt NOT}^{(3)}_\mathfrak T,\,\,\, \mathbf
K^{(3)}_{{\mathfrak a_t}}\right).$$ Take a model
$$\mathcal M  = (T,\,Ag,\, \mathbf {EpSit}, \,den,
\, ^\mathcal M{\tt Hol})$$ such that:
\begin{itemize}
\item $^\mathcal M{\tt Hol}_{\tt
    I}(Level_{Height(\alpha)}(\alpha)= \,\, ^\mathcal M{\tt
    Hol}_{\tt I}((\mathbf q,\mathbf q,\mathbf f))\,=
    P_{\ket{\psi}}\breve{} $,\nl where
    $\ket{\psi}=\frac{1}{\sqrt{2}}(\ket{0}+ \ket{1}) \otimes
    \frac{1}{\sqrt{2}}(\ket{0}+ \ket{1})\otimes \ket{0}. $
\item $\mathbf{EpSit}$ assigns to agent $\mathfrak {a_t}$ the
    epistemic situation $$(\mathfrak
    {T_{a_t}},\,EpD_{\mathfrak {a_t}},\, \mathbf U_{{\mathfrak
 a_t}},\, \mathbf K_{{\mathfrak a_t}}),$$ where
 $EpD_{\mathfrak {a_t}}= \mathfrak D$ and $\mathbf
 K_{{\mathfrak a_t}}(\rho)= \rho$, for any $\rho \in \mathfrak
 D$. In other words, $\mathfrak {a_t}$ has a {\em maximal
 epistemic capacity\/}\footnote{See Section 3 of the first
 Part of this article.}.

\end{itemize}

We obtain:\nl $^\mathcal M{\tt Hol}_{\tt I}(\mathcal K\mathbf
a_t\lnot(\mathbf q \land \lnot \mathbf q))= P_{\ket{\varphi}}$,
 where:\nl $\ket{\varphi}=\, \mathbf
K_{\mathfrak {a_t}}^{(3)}{\tt NOT}_{\tt I}^{(3)} {\tt T}_{\tt
I}^{(1,1,1)} ({\tt I}^{(1)}\otimes {\tt NOT}_{\tt I}^{(1)} \otimes
{\tt I}^{(1)})  (\frac{1}{\sqrt{2}}(\ket{0} +\ket{1}) \otimes
\frac{1}{\sqrt{2}}(\ket{0} +\ket{1}) \otimes \ket{0})=
\frac{1}{2}(\ket{0,1,1} + \ket{0,0,1} + \ket{1,1,0} +
\ket{1,0,1}).$\nl Hence, ${\tt p}_{\tt I}(^\mathcal M {\tt
Hol}_{\tt I}(\mathcal K\mathbf a_t\lnot(\mathbf q \land \lnot
\mathbf q))) = \frac {3}{4} \neq 1$.\nl This example clearly shows
how even an agent with a maximal epistemic capacity does not
necessarily know a very simple instance of the non-contradiction
principle!

 Unlike standard {\em compositional\/}
semantics, any $^\mathcal M{\tt Hol}_\mathfrak T(\alpha)$
represents a kind of autonomous semantic context that is not
necessarily correlated with the meanings of other sentences. At
the same time, given a sentence $\gamma$, $^\mathcal M{\tt
Hol}_\mathfrak T$ determines the {\em contextual
 meaning},
 with respect to the context $^\mathcal M{\tt Hol}_\mathfrak T(\gamma)$,
 of any {\em occurrence of a subformula} $\beta$
 in the syntactical tree of $\gamma$.

 \begin{definition}\label{de:newcont}
  {\em (Contextual meaning)}\nl
  Consider a sentence $\gamma$ such that
  $$Level_i(\gamma) = (\beta_{i_1},\ldots,\beta_{i_r}).  $$
  The {\em contextual meaning\/} of the occurrence $\beta_{i_j}$
  with respect to the context $^\mathcal
M{\tt Hol}_\mathfrak T(\gamma)$ is defined as follows:
$$^\mathcal
M{\tt Hol}^\gamma_\mathfrak T(\beta_{i_j}):= Red^j(^\mathcal
M{\tt Hol}_\mathfrak T(Level_i(\gamma))).   $$

   \end{definition}

   Hence, in particular, we have for any sentence $\gamma$
$$^\mathcal M{\tt Hol}_\mathfrak T^{\gamma}(\gamma)=
 \,\,^\mathcal M{\tt Hol}_\mathfrak T(Level_1(\gamma))=
 \,\,^\mathcal M{\tt Hol}_\mathfrak T(\gamma).$$

Generally, different occurrences $\beta_{i_j}$ and $\beta_{h_k}$
of one and the same subformula $\beta$ in the syntactical tree of
$\gamma$ may receive different contextual meanings. In other
words, we may have:
$$^\mathcal M{\tt Hol}_\mathfrak T^{\gamma}(\beta_{i_j}) \,\neq\,
^\mathcal M{\tt Hol}_\mathfrak T^{\gamma}(\beta_{h_k}).
$$

When this is not the case, we will say that one is dealing with a
{\em normal model}.

  \begin{definition}{\em (Normal holistic model)} \label{de:whol}\nl
 A {\em normal holistic model} of the language $\mathcal L^{EpQC}$ is a
 holistic model  $\mathcal M$ such that for any
 truth-perspective
  $\mathfrak T$ and for any sentence $\gamma$,
  the interpretation $^\mathcal M{\tt Hol}_\mathfrak T$
determines for any occurrence $\beta_{i_j}$ of a subformula
$\beta$ of $\gamma$ in the syntactical tree of $\gamma$ the same
contextual meaning, which will be uniformly indicated by
$^\mathcal M{\tt Hol}_\mathfrak T^\gamma(\beta)$.

\end{definition}

In the following we will always refer to normal holistic models.

 Suppose that $\beta$
 is a subformula of two different formulas
 $\gamma$ and $\delta$. Generally, we have:
 $$^\mathcal M{\tt Hol}_\mathfrak T^{\gamma}(\beta)\neq\,\,
 ^\mathcal M{\tt Hol}_\mathfrak T^{\delta}(\beta).$$

 In other words, sentences may receive different contextual
 meanings in different contexts also in the case of the normal holistic semantics.

To what extent do contextual meanings and gates (associated to the
logical connectives) commute? An answer to this question is given
by the following theorem.

\begin{theorem} \label{th:commut} \nl
Consider a holistic model $\mathcal M  = (T,\,Ag,\, \mathbf
{EpSit}, \,den, \, ^\mathcal M{\tt Hol})$ and a truth-perspective
$\mathfrak T$.
\begin{enumerate}
\item[1.] Let $\lnot \alpha$ be a subformula of $\gamma$.
    Suppose that $\lnot \alpha= \beta_{i_j}$ (the formula
    occurring at the $j$-th position of the $i$-th level in
    the syntactical tree of $\gamma$), while $\alpha=
    \beta_{(i+1)_k}$. We have:\nl $^\mathcal M{\tt
    Hol}_\mathfrak T^\gamma(\lnot \alpha) = \,Red^j(^\mathcal
    M{\tt Hol}_\mathfrak T(Level_i(\gamma))) =$\nl $ \,\,
    ^\mathfrak D{\tt NOT}_\mathfrak T^{(At(\alpha))}
    (Red^k(^\mathcal M{\tt Hol}_\mathfrak
    T(Level_{i+1}(\gamma))))=$\nl $^\mathfrak D{\tt
NOT}_\mathfrak T^{(At(\alpha))}
    (^\mathcal M{\tt Hol}_\mathfrak T^\gamma(\alpha))$.
\item[2.] Let $\sqrt{id} \alpha$ be a subformula of $\gamma$.
    Suppose that $\sqrt{id} \alpha= \beta_{i_j}$, while
    $\alpha= \beta_{(i+1)_k}$. We have:\nl $^\mathcal M{\tt
    Hol}_\mathfrak T^\gamma(\sqrt{id} \alpha) = \,
    Red^j(^\mathcal M{\tt Hol}_\mathfrak
    T(Level_i(\gamma)))=$\nl $\,\,^\mathfrak D \sqrt {\tt
     I}_\mathfrak T^{(At(\alpha))}
    (Red^k(^\mathcal M{\tt Hol}_\mathfrak
    T(Level_{i+1}(\gamma))))=$\nl $\,\,^\mathfrak D\sqrt{\tt
    I}_\mathfrak
T^{(At(\alpha))}
    (^\mathcal M{\tt Hol}_\mathfrak T^\gamma(\alpha))$.
\item[3.] Let $\intercal (\alpha_1,\alpha_2,\alpha_3) $ be a
    subformula of $\gamma$. Suppose that in the syntactical
    tree of $\gamma$: $\intercal (\alpha_1,\alpha_2,\alpha_3)=
    \beta_{i_j}$, while $\alpha_1= \beta_{(i+1)_{k_1}}$,
    $\alpha_2= \beta_{(i+1)_{k_2}}$, $\alpha_3=
    \beta_{(i+1)_{k_3}}$ We have:\nl $^\mathcal M{\tt
    Hol}_\mathfrak T^\gamma(\intercal
    (\alpha_1,\alpha_2,\alpha_3)) = \,Red^j(^\mathcal M{\tt
    Hol}_\mathfrak T(Level_i(\gamma))) =$ \nl $^\mathfrak D
    {\tt T}_\mathfrak T^{(At(\alpha_1),At(\alpha_2),
    At(\alpha_3))} (Red^{k_1, k_2,k_3}(^\mathcal M{\tt
    Hol}_\mathfrak T(Level_{i+1}(\gamma))))$.
\item[4.] Let $\alpha_1 \uplus \alpha_2 $ be a subformula of
    $\gamma$. Suppose that in the syntactical tree of
    $\gamma$: $\alpha_1 \uplus \alpha_2)= \beta_{i_j}$, while
    $\alpha_1= \beta_{(i+1)_{k_1}}$, $\alpha_2=
    \beta_{(i+1)_{k_2}}$. We have:\nl $^\mathcal M{\tt
    Hol}_\mathfrak T^\gamma(\alpha_1 \uplus \alpha_2) =
    \,Red^j(^\mathcal M{\tt Hol}_\mathfrak T(Level_i(\gamma)))
    =$ \nl $^\mathfrak D {\tt XOR}_\mathfrak
    T^{(At(\alpha_1),At(\alpha_2))} (Red^{k_1, k_2}(^\mathcal
    M{\tt Hol}_\mathfrak T(Level_{i+1}(\gamma))))$.
\item[5.] Let $\mathcal U_{\mathbf a_t} \alpha$ be a
    subformula of $\gamma$. Suppose that $ \mathcal U_{\mathbf
    a_t} \alpha= \beta_{i_j}$, while $\alpha=
    \beta_{(i+1)_k}$. We have:\nl $^\mathcal M{\tt
    Hol}_\mathfrak T^\gamma(\mathcal U_{\mathbf a_t} \alpha) =
    \,Red^j(^\mathcal M{\tt Hol}_\mathfrak T(Level_i(\gamma)))
    =$\nl $\mathbf U_{\mathfrak {a_t}}^{(At(\alpha))}
    (Red^k(^\mathcal M{\tt Hol}_\mathfrak
    T(Level_{i+1}(\gamma))))= \mathbf U_{\mathfrak
    {a_t}}^{(At(\alpha))}(^\mathcal M{\tt Hol}_\mathfrak
    T^\gamma(\alpha))$.
\item[6.] Let $\mathcal K_{\mathbf a_t} \alpha$ be a
    subformula of $\gamma$. Suppose that $\mathcal K_{\mathbf
    a_t} \alpha= \beta_{i_j}$, while $\alpha=
    \beta_{(i+1)_k}$. We have:\nl $^\mathcal M{\tt
    Hol}_\mathfrak T^\gamma(\mathcal K_{\mathbf a_t} \alpha) =
    \,Red^j(^\mathcal M{\tt Hol}_\mathfrak T(Level_i(\gamma)))
    =$\nl $\mathbf K_{\mathfrak {a_t}}^{(At(\alpha))}
    (Red^k(^\mathcal M{\tt Hol}_\mathfrak
    T(Level_{i+1}(\gamma))))= \mathbf K_{\mathfrak
    {a_t}}^{(At(\alpha))}(^\mathcal M{\tt Hol}_\mathfrak
    T^\gamma(\alpha))$.

\end{enumerate}

\end{theorem}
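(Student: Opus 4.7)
The plan is to give a single uniform argument that covers all six cases at once, since they differ only in which connective sits at position $j$ of $Level_i(\gamma)$ and, correspondingly, which factor of the pseudo-gate $^\gamma\mathbf O^{(i)}_\mathfrak T$ is responsible for $\beta_{i_j}$. The first equality of each item is just Definition \ref{de:newcont}, so nothing is really at stake there; the content lies in the second equality, which asserts that the reduced-state function commutes with the action of a pseudo-gate through the level-to-level dynamics.

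To make this concrete, I would first rewrite $^\mathcal M{\tt Hol}_\mathfrak T(Level_i(\gamma))$ using condition (2.2) of Definition \ref{de:hol} as $^\gamma\mathbf O^{(i)}_\mathfrak T\bigl({}^\mathcal M{\tt Hol}_\mathfrak T(Level_{i+1}(\gamma))\bigr)$, and then expand $^\gamma\mathbf O^{(i)}_\mathfrak T$ according to Definition \ref{de:qumtree} as a tensor product $^\alpha\mathbf X^{(r_1)}_\mathfrak T \otimes \cdots \otimes {}^\alpha\mathbf X^{(r_m)}_\mathfrak T$. The indexing conventions built into Definition \ref{de:sintree} (the $\bullet$-composition of the sequences $\mathcal s^\prime_h$) guarantee that the tensor factor associated with $\beta_{i_j}$ acts precisely on the tensor slot(s) of $\mathcal H^\gamma$ indexed by $k$ (in cases 1, 2, 5, 6), by $(k_1,k_2,k_3)$ (in case 3), or by $(k_1,k_2)$ (in case 4), and equals ${}^\mathfrak D{\tt NOT}_\mathfrak T^{(At(\alpha))}$, ${}^\mathfrak D\sqrt{{\tt I}}_\mathfrak T^{(At(\alpha))}$, ${}^\mathfrak D{\tt T}_\mathfrak T^{(At(\alpha_1),At(\alpha_2),At(\alpha_3))}$, ${}^\mathfrak D{\tt XOR}_\mathfrak T^{(At(\alpha_1),At(\alpha_2))}$, $\mathbf U_{\mathfrak{a_t}}^{(At(\alpha))}$, or $\mathbf K_{\mathfrak{a_t}}^{(At(\alpha))}$, respectively.

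The crucial step, and the one I would flag as the main technical point, is the commutation lemma for reduced states: if $\rho$ is a qumix on a composite space and $\mathbf X \otimes \mathbf Y$ is a pseudo-gate with $\mathbf X$ acting on the tensor slots indexed by a set $J$ and $\mathbf Y$ acting on the complementary slots, then
$$Red^J\bigl((\mathbf X \otimes \mathbf Y)(\rho)\bigr) \;=\; \mathbf X\bigl(Red^J(\rho)\bigr).$$
This is the standard partial-trace identity, and it applies here because both qumix gates and the epistemic operations $\mathbf U_{\mathfrak{a_t}}$, $\mathbf K_{\mathfrak{a_t}}$ from the structures of Part I are trace-preserving on the set of qumixes, so tracing out the $\mathbf Y$-factors after applying $\mathbf Y$ produces the same result as tracing them out first. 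Applying this identity with the appropriate $J$ in each of the six cases gives exactly the second displayed equality of that case.

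Finally, I would close each of cases 1, 2, 5 and 6 with one further invocation of Definition \ref{de:newcont}, rewriting $Red^k\bigl({}^\mathcal M{\tt Hol}_\mathfrak T(Level_{i+1}(\gamma))\bigr)$ as ${}^\mathcal M{\tt Hol}^\gamma_\mathfrak T(\alpha)$; cases 3 and 4 already stop at the reduced-state expression, so no further rewriting is needed. The only bookkeeping hazard is making sure the positional correspondence $j \leftrightarrow k$ (or $j \leftrightarrow (k_1,k_2,k_3)$, etc.) is read off correctly from the recursive definition of $Level_{i+1}$, and this is where I would be most careful; once this is settled, the six cases are syntactic variants of a single line of reasoning.
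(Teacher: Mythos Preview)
Your proposal is correct and is essentially a careful unpacking of the paper's own proof, which reads in its entirety: ``By definition of syntactical tree, of pseudo-gate tree, of normal holistic model and of contextual meaning.'' You have supplied the missing detail that the paper leaves implicit, namely the partial-trace commutation $Red^J\bigl((\mathbf X\otimes\mathbf Y)(\rho)\bigr)=\mathbf X\bigl(Red^J(\rho)\bigr)$ for trace-preserving $\mathbf Y$, and you have correctly located where each of the four cited definitions enters; there is no genuine divergence of approach.
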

\begin{proof} By definition of syntactical tree, of pseudo-gate
tree, of normal holistic model and of contextual meaning.

\end{proof}

Notice that, generally, the contextual meaning of a conjunction is
not the conjunction of the contextual meanings of the two members.
As a counterexample, consider the following contradictory
sentence:
$$\gamma= \mathbf q \land \lnot \mathbf q \,\,=
\intercal (\mathbf q,\lnot \mathbf q, \mathbf f ), $$ whose syntactical tree is:
$$Level_3(\gamma) = (\mathbf q,\mathbf q,\mathbf f)    $$
$$Level_2(\gamma) = (\mathbf q,\lnot \mathbf q,\mathbf f)    $$
$$Level_1(\gamma) = \intercal (\mathbf q,\lnot \mathbf q,\mathbf f)    $$

Consider a model $\mathcal M$ such that:
$$^\mathcal M{\tt Hol}_{\tt I}(Level_3(\gamma))=
P_{\frac {1}{\sqrt{2}}(\ket{0,1,0}+\ket{1,0,0})}  $$ (which is a
maximally entangled quregister with respect to the first and to
the second part\footnote{See Section 2 of the first Part of this
article.}). Hence:
$$^\mathcal M{\tt Hol}_{\tt I}(Level_1(\gamma))=\,\,
^\mathcal M{\tt Hol}_{\tt I}(\gamma)=\,\,
 ^\mathcal M{\tt Hol}_{\tt I}^\gamma(\gamma)=\,\,
P_{\frac {1}{\sqrt{2}}(\ket{0,0,0}+\ket{1,1,1})}  $$ (which is a
maximally entangled quregister). At the same time, we have:
$$^\mathcal M{\tt Hol}_{\tt I}^\gamma(\mathbf q) =
\,\,^\mathcal M{\tt Hol}_{\tt I}^\gamma(\lnot \mathbf q)\,\,
= \,\, \frac {1}{2}P^{(1)}_0 + \frac {1}{2}P^{(1)}_1,  $$
 which is a proper mixture. Consequently:
 $$^\mathfrak D{\tt T}^{(1,1,1)}
 (^\mathcal M{\tt Hol}_{\tt I}^\gamma(\mathbf q)\otimes\,\,
 ^\mathcal M{\tt Hol}_{\tt I}^\gamma(\lnot \mathbf q)\otimes\,\,
 ^\mathcal M{\tt Hol}_{\tt I}^\gamma(\mathbf f)) \,\,
 \neq \,\, ^\mathcal M{\tt Hol}_{\tt I}^\gamma(\intercal(\mathbf q,
 \lnot \mathbf q, \mathbf f)).$$
Notice that: ${\tt p}_{\tt I}(^\mathcal M {\tt Hol}^\gamma_{\tt I}
(\mathbf q \land \lnot \mathbf q)) = \frac {1}{2}$; while: \nl
${\tt p}_{\tt I}(^\mathfrak D {\tt T}^{(1,1,1,)}(^\mathcal M {\tt
Hol}^\gamma_{\tt I} (\mathbf q)\otimes \, ^\mathcal M {\tt
Hol}^\gamma_{\tt I} (\lnot \mathbf q)\otimes \, ^\mathcal M {\tt
Hol}^\gamma_{\tt I} (\mathbf f))) = \frac {1}{4}$.

\begin{definition} {(\em Harmonic and sound models)}\nl
 Let
  $\mathcal M  = (T,\,Ag,\, \mathbf {EpSit}, \,den,
                           \, ^\mathcal M{\tt Hol})$ be a model.
  \begin{itemize}
 \item  $\mathcal M$ is called {\em harmonic\/} iff the
     epistemic structure of $\mathcal M$  is harmonic, i.e.
     all agents of the structure share the same
     truth-perspective $\mathfrak T$. Furthermore, the
     interpretation-function $^\mathcal M{\tt Hol}$  is only
     defined for  the truth-perspective $\mathfrak T$.
 \item $\mathcal M$ is called {\em sound\/} iff all agents
     $\mathfrak {a_t}$ of $\mathcal M$ have  a sound epistemic
     capacity (i.e. assign the ``right'' probability-values to
     the truth-values of their
     truth-perspectives).\footnote{See Section 3 of the first
     Part of this article.}
     \end{itemize}
\end{definition}

  By {\em harmonic epistemic quantum computational semantics\/}
  ({\em sound epistemic
 quantum computational semantics\/}) we will mean the semantics
 that only refers to harmonic models (sound models).

  We can now define the notions of {\em truth\/}, {\em validity\/}
 and {\em logical consequence}.

 \begin{definition}
 {\em (Contextual truth)}\label{de:conttruth} \nl
 Let $\alpha$ be a subformula of $\gamma$.\nl
 $\vDash_{(\gamma,\mathcal M, \mathfrak T)} \alpha$  (the sentence $\alpha$
 is {\em true\/} with respect to the context $\gamma$,
 the model $\mathcal M$ and  the truth-perspective
 $\mathfrak T$) iff
 ${\tt p}_\mathfrak T(^\mathcal M{\tt Hol}^\gamma_\mathfrak T(\alpha))= 1$.

 \end{definition}

 \begin{definition}
 {\em (Truth)}\label{de:truth} \nl
 $\vDash_{(\mathcal M,\mathfrak T)} \alpha$  (the sentence $\alpha$
 is {\em true\/} with respect to the model $\mathcal M$ and
the truth-perspective
 $\mathfrak T$) iff
$\vDash_{(\alpha,\mathcal M, \mathfrak T)} \alpha$.

  \end{definition}

  Hence,
  the concept of truth turns out to be a special case of the
  concept
  of contextual truth.

  \begin{definition}{(Contextual validity)}\label{de:contvalid} \nl
  Let $\alpha$ be a subformula of $\gamma$.
  \begin{itemize}
  \item $\vDash_{(\gamma,\mathfrak T)} \alpha$  (the sentence
      $\alpha$ is {\em valid\/}  with respect to the context
      $\gamma$ and the truth-perspective $\mathfrak T$) iff
   for any model $\mathcal M$, $\vDash_{(\gamma, \mathcal M,
  \mathfrak T)} \alpha$.
  \item $\vDash_\gamma \alpha$  (the sentence $\alpha$
  is {\em valid\/} with respect to the context $\gamma$) iff
    for any truth- perspective $\mathfrak T$,
  $\vDash_{(\gamma,\mathfrak T)} \alpha$.

  \end{itemize}

  \end{definition}

 \begin{definition} {\em (Validity)} \label{de:valid}\nl

   \begin{itemize}

   \item $\vDash_{\mathfrak T} \alpha$  (the sentence $\alpha$
       is {\em valid\/} with respect to the truth-perspective
    $\mathfrak T$) iff $\vDash_{(\alpha, \mathfrak T)}
    \alpha$.
   \item $\vDash \alpha$  (the sentence $\alpha$
   is {\em valid\/})  iff $\vDash_{\alpha} \alpha$.

   \end{itemize} \end{definition}

    \begin{definition}{\em (Consequence with respect
   to a quasi-model $\mathcal M^q$)}
    \label{de:quasicons} \nl
    Let $\gamma$ be a sentence such that $\alpha$ and $\beta$ are
     subformulas of $\gamma$ and let $\mathfrak T$ be a
     truth-perspective.\nl
   \begin{itemize}
  \item $\alpha \vDash_{(\gamma,\,\mathcal M^q, \mathfrak T)}
      \beta$ (the sentence $\beta$ is a {\em consequence\/} of
      the sentence $\alpha$  with respect to the context
      $\gamma$, the quasi-model $\mathcal M^q$ and the
    truth-perspective $\mathfrak T$)    iff for any model
    $\mathcal M$ based on
     $\mathcal M^q$:
     $$\vDash_{(\gamma,\mathcal M, \mathfrak T)} \alpha\,\,\,\Rightarrow
     \,\,\,\vDash_{(\gamma,\mathcal M, \mathfrak T)} \beta.   $$

    \item
    $\alpha \vDash_{(\gamma,\mathcal M^q)} \beta$
     (the sentence $\beta$
     is a {\em  consequence\/}  of the sentence $\alpha$  with respect
     to the context $\gamma$ and the quasi-model
     $\mathcal M^q$)   iff for any  truth-perspective
      $\mathfrak T$, $\alpha \vDash_{(\gamma,\mathcal M^q,
     \mathfrak T)} \beta$.
     \end{itemize}

     \end{definition}

     \begin{definition} {\em Logical
     consequence}\label{de:logcons}\nl
     Let $\gamma$ be a sentence such that $\alpha$ and $\beta$ are
      subformulas of $\gamma$ and let $\mathfrak T$ be a
      truth-perspective.\nl
  \begin{itemize}
  \item $\alpha \vDash_{(\gamma, \mathfrak T)} \beta$ ($\beta$
      is a {\em logical consequence\/} of $\alpha$ with
      respect to the context $\gamma$ and the
      truth-perspective  $\mathfrak T$) iff for any
      quasi-model $\mathcal M^q$, $\alpha
      \vDash_{(\gamma,\mathcal M^q, \mathfrak T)} \beta$.
  \item $\alpha \vDash_{\gamma} \beta$ ($\beta$ is a {\em logical consequence\/}
  of $\alpha$ with respect to the context $\gamma$) iff for
   any truth-perspective $\mathfrak T$, $\alpha
   \vDash_{(\gamma, \mathfrak T)} \beta$.
   \item $\alpha \vDash \beta$ ($\beta$ is a {\em logical
       consequence\/}  of $\alpha$ iff for any context
       $\gamma$ such that $\alpha$ and $\beta$ are subformulas
       of $\gamma$ and $\alpha \vDash_{\gamma} \beta$.

  \end{itemize}
       \end{definition}

 The concepts of {\em consequence\/} and of {\em logical
 consequence\/}, defined above, correspond to {\em weak\/} concepts,
 being
  defined in terms of $\mathfrak T$-{\em Truth\/}, and not  in terms of the
 preorder relation $\preceq_\mathfrak T$ (as one could expect).
 From an intuitive point of view, however,  such weak notions turn
 out to be more interesting in the case of epistemic situations
 described in the framework of a holistic semantics.

Notice that  only the contextual notion of logical consequence
turns out to satisfy transitivity ($\alpha
\vDash_{\gamma}\beta\,\, \text{and}\,\,\beta
\vDash_{\gamma}\delta\, \Rightarrow \,\alpha \vDash_{\gamma}\delta
$). Full transitivity ($\alpha \vDash\beta\,\, \text{and}\,\,\beta
\vDash \delta\, \Rightarrow \,\alpha \vDash\delta $) is naturally
violated in the holistic semantics.

 As expected, in the particular case of the harmonic epistemic semantics
 (where all agents share the same truth-perspective)  the
 definitions of truth, validity and  logical consequence can be
 simplified, since the reference to $\mathfrak T$ is no longer
 necessary. Accordingly, in such a case we will write: \nl
 $\vDash_{(\gamma,\mathcal M)}^{Harm}\alpha;\,\,
 \vDash_{\mathcal M}^{Harm}\alpha$ (harmonic truth); \nl
 $\vDash_{\gamma}^{Harm}\alpha;\,\,
 \vDash^{Harm}\alpha$ (harmonic validity); \nl
  $\alpha \vDash_{(\gamma, \,\mathcal M^q)}^{Harm}\beta; \,\,
  \alpha \vDash_\gamma^{Harm}\beta;\,\,
\alpha \vDash^{Harm}\beta$   (harmonic logical consequence).

 \section{Some epistemic situations} We will now illustrate some
  significant examples of epistemic
  situations that arise in this  semantics.
 We will always refer to models   $$\mathcal M  =
 (T,\,Ag,\, \mathbf{EpSit}, \, den,
             \, ^\mathcal M{\tt Hol})$$
 such that $den(\mathbf a)= \mathfrak a; \,\, den(t)= \mathfrak t $.

 \begin{enumerate}
 \item [1)]  $\mathcal K\mathbf a_t\alpha \vDash^{Harm}
     \alpha$. \nl In the harmonic semantics, sentences that
     are known by a given agent at a given time are true.\nl
     $1)$ is an immediate consequence of the definition of
     logical consequence and of Theorem \ref{th:commut}.

    \item[2)] As a particular case of 1)  we obtain: \nl
        $\mathcal K\mathbf a_t\alpha \mathcal K\mathbf
        a_t\alpha \vDash^{Harm} \mathcal K\mathbf
        a_t\alpha$.\nl Knowing of knowing implies knowing. But
        not the other way around!

\item[3)] In the non-harmonic semantics only the two following
conditions (which are weaker than 1) and 2)) hold for any
 quasi-model $\mathcal M^q$ and any agent $\mathfrak {a_t}$ of
 $\mathcal M^q$:
\begin{enumerate}
\item[3.1)]  $\mathcal K\mathbf a_t \alpha
    \vDash_{(\mathcal M^q, \mathfrak T_{\mathfrak {a_t}})}
\alpha;  $
\item[3.2)]  $\mathcal K\mathbf a_t \alpha \mathcal
    K\mathbf a_t \alpha \vDash_{(\mathcal M^q, \mathfrak
    T_{\mathfrak {a_t}})} \mathcal K\mathbf a_t\alpha.  $
    \end{enumerate}

\item[4)] $\mathcal K\mathbf a_t\mathcal K\mathbf b_t \alpha
    \vDash^{Harm} \alpha.  $ \nl In the harmonic semantics,
    knowing that another agent knows a given sentence implies
 that the sentence in question holds. At the same time, we
 will have: \nl $\mathcal K\mathbf a_t \mathcal K\mathbf b_t
 \alpha \nvDash^{Harm} \mathcal K\mathbf a_t \alpha$.\nl Alice
  might know that Bob knows a given sentence, without knowing
 herself the sentence in question!

 \item[5)] In the harmonic sound semantics (where for any
     agent $\mathfrak {a_t}$, $\mathbf K_{\mathfrak
     a_{\mathfrak t}}\, ^{\mathfrak {T_{a_t}}}P^{(1)}_1 = \,
     ^{\mathfrak {T_{a_t}}}P^{(1)}_1$ and $\mathbf
     K_{\mathfrak a_{\mathfrak t}} \, ^{\mathfrak
     {T_{a_t}}}P^{(1)}_0 = \, ^{\mathfrak
     {T_{a_t}}}P^{(1)}_0$) we have:
 $$\vDash^{Harm} \mathcal K\mathbf a_t\mathbf t; \,\,\, \vDash^{Harm}
  \mathcal K\mathbf a_t\lnot \mathbf f.  $$ Hence, there are
 sentences that  every agent knows.

 \item[6)] $\mathcal K\mathbf a_t(\alpha \land \beta)  \nvDash
     \mathcal K\mathbf a_t\alpha;\,\, \mathcal K\mathbf
     a_t(\alpha \land \beta) \nvDash \mathcal K\mathbf
     a_t\beta. $ \nl Knowing a conjunction does not generally
     imply knowing its members.\nl

\item[7)] $\vDash_{(\gamma, \mathcal M)}\mathcal K\mathbf
    a_t\alpha\,\, \text{and} \,\, \vDash_{(\gamma,
  \mathcal M)}\mathcal K\mathbf a_t\beta\,\,\nRightarrow \,\,
   \vDash_{(\gamma, \mathcal M)}\mathcal K\mathbf a_t(\alpha
  \land \beta).$\nl Knowledge is not generally closed under
  conjunction.

\item[8)] Let $\mathcal M$ be a model  and let $\mathfrak
    {a_t}$  be an agent of $\mathcal M$.\nl
    $\nvDash_{(\mathcal M, \mathfrak T_{\mathfrak {a_t}})}
    \mathcal K\mathbf a_t(\alpha \land \lnot\alpha)$.\nl
    Contradictions are never known.\nl In order to prove 8),
    suppose, by contradiction, that there exists a model
    $\mathcal M$ and an agent  $\mathfrak {a_t} $ such
 that: $\vDash_{(\mathcal M, \mathfrak T_{\mathfrak {a_t}})}
 \mathcal K\mathbf a_t(\alpha \land \lnot\alpha)$.\nl Then, $
  {\tt p}_{\mathfrak T_{\mathfrak {a_t}}}(^\mathcal M{\tt
  Hol}_ {\mathfrak T_{\mathfrak {a_t}}}(\mathcal K\mathbf
a_t(\alpha \land \lnot \alpha)))=1 $. \nl
 By definition of model and by Theorem \ref{th:commut} we
have:\nl $^\mathcal M{\tt Hol}_{\mathfrak T_{\mathfrak {a_t}}}
(\mathcal K\mathbf a_t(\alpha \land \lnot \alpha)) =
 \mathbf K_{\mathfrak a_\mathfrak t} (^\mathcal M{\tt
Hol}_{\mathfrak T_{\mathfrak {a_t}}}(\alpha \land \lnot
\alpha)) $. \nl Consequently, by hypothesis, ${\tt
p}_{\mathfrak T_{\mathfrak {a_t}}}( \mathbf K_{\mathfrak
a_{\mathfrak t}} (^\mathcal M{\tt Hol}_{\mathfrak T_{\mathfrak
{a_t}}}(\alpha \land \lnot \alpha))) =1 $. \nl Thus, by the
properties  of knowledge operations: \nl ${\tt p}_{\mathfrak
T_{\mathfrak {a_t}}} (^\mathcal M{\tt Hol}_{\mathfrak
T_{\mathfrak {a_t}}}(\alpha \land \lnot \alpha)) =1 $, which
is impossible, owing to the following Lemma (of the holistic
semantics).

\begin{lemma}\label{le:nocontrad}
 For any sentence $\alpha$, for any truth-perspective
  $\mathfrak T$ and for any holistic model $\mathcal M$,
  $${\tt p}_\mathfrak T(^\mathcal M{\tt Hol}_
  {\mathfrak T}(\alpha \land \lnot \alpha)) \neq 1.   $$

\end{lemma}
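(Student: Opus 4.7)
The plan is to exploit Theorem \ref{th:commut} and clauses (2.2)--(2.3) of Definition \ref{de:hol} to reduce the question to a contradiction between the truth-probabilities of $\alpha$ and $\lnot \alpha$. Write $\gamma := \alpha \land \lnot \alpha = \intercal(\alpha, \lnot \alpha, \mathbf f)$ and set $n := At(\alpha)$, so $\gamma$ lives in $\mathcal H^{(2n+1)}$. Let $\sigma := {}^\mathcal M{\tt Hol}_\mathfrak T(Level_2(\gamma))$, the qumix attached to the sequence $(\alpha, \lnot \alpha, \mathbf f)$.

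First I would observe that clause (2.3) of Definition \ref{de:hol} forces $\text{Red}^3(\sigma) = {}^\mathfrak T P_0^{(1)}$, a pure (rank-one) state. By the standard fact that a pure marginal entails full factorisation of the global state, I can therefore write $\sigma = \tau \otimes {}^\mathfrak T P_0^{(1)}$ for some $\tau \in \mathfrak D(\mathcal H^{(2n)})$. Clause (2.2) gives ${}^\mathcal M{\tt Hol}_\mathfrak T(\gamma) = {}^\mathfrak D{\tt T}_\mathfrak T^{(n,n,1)}(\sigma)$. In the $\mathfrak T$-basis, the Toffoli pseudo-gate flips the last qubit of the third register exactly when the last qubits of each of the first two $n$-registers are in state ${}^\mathfrak T|1\rangle$; together with the determinism of the third qubit, a direct computation yields
\[
{\tt p}_\mathfrak T\bigl({}^\mathcal M{\tt Hol}_\mathfrak T(\gamma)\bigr) = \Tr\bigl(\tau \cdot AB\bigr),
\]
where $A$ and $B$ are the commuting ``last-qubit-is-true'' projections on the first and second $n$-block respectively.

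Now I would assume for contradiction that this probability equals $1$. Because $A$ and $B$ commute, $AB$ is a projection, so $\tau$ must be supported on its range, and consequently $\Tr(\tau A) = \Tr(\tau B) = 1$. Pulling back through marginals, this is precisely the statement ${\tt p}_\mathfrak T\bigl({}^\mathcal M{\tt Hol}_\mathfrak T^\gamma(\alpha)\bigr) = 1 = {\tt p}_\mathfrak T\bigl({}^\mathcal M{\tt Hol}_\mathfrak T^\gamma(\lnot \alpha)\bigr)$. But in a normal holistic model, Theorem \ref{th:commut}(1) forces ${}^\mathcal M{\tt Hol}_\mathfrak T^\gamma(\lnot \alpha) = {}^\mathfrak D{\tt NOT}_\mathfrak T^{(n)}\bigl({}^\mathcal M{\tt Hol}_\mathfrak T^\gamma(\alpha)\bigr)$, and since ${\tt NOT}_\mathfrak T$ exchanges the one-qubit truth-projections, ${\tt p}_\mathfrak T\bigl({}^\mathcal M{\tt Hol}_\mathfrak T^\gamma(\lnot \alpha)\bigr) = 1 - {\tt p}_\mathfrak T\bigl({}^\mathcal M{\tt Hol}_\mathfrak T^\gamma(\alpha)\bigr)$. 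The two equalities together yield $1 = 0$, the desired contradiction.

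The only delicate step is the passage from a probability-$1$ joint event on $\tau$ to probability-$1$ on each one-qubit marginal. This relies crucially on the factorisation $\sigma = \tau \otimes {}^\mathfrak T P_0^{(1)}$: decoupling the ``target'' qubit is what guarantees that the output-truth of the Toffoli literally reads ``both controls are true'', so that ${\tt p}_\mathfrak T(\,\cdot\,)$ reduces to the measurement of the \emph{product} projection $AB$. If the third register were entangled with the others, truth mass could be shuffled between target and controls and the clean identification would fail. Once this is secured, the rest of the argument is routine use of Theorem \ref{th:commut} and the one-qubit identity ${\tt p}_\mathfrak T({\tt NOT}_\mathfrak T\rho) = 1 - {\tt p}_\mathfrak T(\rho)$.
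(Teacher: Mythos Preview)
The paper does not actually prove Lemma~\ref{le:nocontrad}; it is merely stated there as a fact ``of the holistic semantics'' and invoked to finish the argument for item~8). So there is no reference proof to compare against, and the question is simply whether your argument stands on its own.

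It does. The two load-bearing steps are both sound. First, clause~(2.3) forces the third marginal of $\sigma$ to be the pure state ${}^\mathfrak T P_0^{(1)}$, and the standard quantum-information fact that a pure marginal forces a tensor split gives $\sigma=\tau\otimes{}^\mathfrak T P_0^{(1)}$; this is exactly what is needed so that after Toffoli the $\mathfrak T$-truth of $\gamma$ equals $\Tr(\tau\,AB)$ with $A,B$ the commuting ``last-qubit-true'' projections on the two $n$-blocks. Second, from $\Tr(\tau\,AB)=1$ you correctly extract $\Tr(\tau A)=\Tr(\tau B)=1$, i.e.\ ${\tt p}_\mathfrak T({}^\mathcal M{\tt Hol}_\mathfrak T^\gamma(\alpha))={\tt p}_\mathfrak T({}^\mathcal M{\tt Hol}_\mathfrak T^\gamma(\lnot\alpha))=1$, and Theorem~\ref{th:commut}(1) together with ${\tt p}_\mathfrak T\circ{}^\mathfrak D{\tt NOT}_\mathfrak T=1-{\tt p}_\mathfrak T$ yields the contradiction.

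One point worth making explicit: your appeal to Theorem~\ref{th:commut}(1) tacitly identifies the occurrence of $\alpha$ at position~1 of $Level_2(\gamma)$ with the occurrence of $\alpha$ at position~2 of $Level_3(\gamma)$ (the one produced by stripping $\lnot$ from $\lnot\alpha$). You flag this by saying ``in a normal holistic model'', which is exactly right; normality is what licenses that identification, and the paper's standing convention (stated just before Theorem~\ref{th:commut}) is that all models under discussion are normal. Without normality the argument as written would not close.
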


 \item[9)] In the non-harmonic semantics the following
     situation is possible:\nl $ \vDash_{(\mathcal M,\mathfrak
     T_{\mathfrak {a_t}})} \mathcal K\mathbf a_t \mathcal
     K\mathbf b_t \mathbf f.$ \nl In other words, according to
     the truth-perspective of Alice it is true
  that Alice (at time $t$) knows that Bob (at time $t$) knows
   the {\em Falsity\/} of Alice's truth-perspective. \nl As an
   example, consider a (non-harmonic)  model $\mathcal M$ with
   two agents $\mathfrak {a_t}$ and $\mathfrak {b_t}$
   satisfying the following conditions:
    \begin{enumerate}
    \item[a)] the epistemic distance between the
        truth-perspectives of $\mathfrak {a_t}$ and of
        $\mathfrak {b_t}$ is greater than or equal to
        $\frac {1}{2}$.\footnote{The concept of {\em
        epistemic distance\/} has been defined in Section
        2 of the first Part of this article.} In such a
        case we have:  $^{\mathfrak T_{\mathfrak
  {a_t}}}P_1^{(1)} \,\,
      \preceq_{\mathfrak T_{\mathfrak {b_t}}}
      \,\,^{\mathfrak T_\mathfrak {a_t}}P_0^{(1)}$ \nl
      (according to Bob's truth-perspective, Alice' s
      {\em Truth\/} precedes Alice's {\em Falsity\/});
   \item[b)] $\mathbf K_{\mathfrak {b_t}} \,^{\mathfrak
       T_{\mathfrak {a_t}}} P_0^{(1)} = \,\, ^{\mathfrak
       T_{\mathfrak {a_t}}}P_1^{(1)}   $\nl (the
       information according to which Bob knows Alice's
       {\em Falsity\/} is true with respect to Alice's
       truth-perspective);
   \item [c)] $\mathbf K_{\mathfrak a_{\mathfrak t}}
       \,^{\mathfrak T_{\mathfrak {a_t}}}P_1^{(1)} = \,\,
       ^{\mathfrak T_{\mathfrak {a_t}}}P_1^{(1)}   $\nl
       (Alice at time $\mathfrak t$ has a sound epistemic
       capacity).

   \end{enumerate}

   Consider the syntactical tree of $\mathcal K\mathbf
   a_t\mathcal K\mathbf b_t \mathbf f$:
   $$ Level_3(\mathcal K\mathbf a_t\mathcal K\mathbf b_t \mathbf f) = \mathbf f $$
   $$ Level_2(\mathcal K\mathbf a_t\mathcal K\mathbf b_t \mathbf f) = K\mathbf b_t \mathbf f $$
   $$ Level_1(\mathcal K\mathbf a_t\mathcal K\mathbf b_t \mathbf f) =
   \mathcal K\mathbf a_t \mathcal K\mathbf b_t\mathbf f $$

   The qumixes assigned by $^\mathcal M{\tt Hol}_{\mathfrak
T_{\mathfrak {a_t}}}$ to the levels of this tree are:

$^\mathcal M{\tt Hol}_{\mathfrak T_{\mathfrak {a_t}}}
(Level_3(\mathcal K\mathbf a_t\mathcal K\mathbf b_t \mathbf
f)) = \,\, ^{\mathfrak T_{\mathfrak {a_t}}}P_0^{(1)}$ \nl (by
definition of model);\nl $^\mathcal M{\tt Hol}_{\mathfrak
T_{\mathfrak {a_t}}} (Level_2(\mathcal K\mathbf a_t\mathcal
K\mathbf b_t \mathbf f)) = \mathbf K_{\mathfrak b_{\mathfrak
t}} (^\mathcal M{\tt Hol}_{\mathfrak T_{\mathfrak {a_t}}}
(Level_3(\mathcal K\mathbf a_t\mathcal K\mathbf b_t \mathbf
f))= \,\, ^{\mathfrak T_\mathfrak {a_t}}P_1^{(1)} $ \nl (by
definition of model and by b));\nl $^\mathcal M{\tt
Hol}_{\mathfrak T_{\mathfrak {a_t}}} (Level_1(\mathcal
K\mathbf a_t\mathcal K\mathbf b_t \mathbf f)) = \mathbf
K_{\mathfrak b_{\mathfrak t}} (^\mathcal M{\tt Hol}_{\mathfrak
T_{\mathfrak {a_t}}} (Level_2(\mathcal K\mathbf a_t\mathcal
K\mathbf b_t \mathbf f))= \,\, ^{\mathfrak T_\mathfrak
{a_t}}P_1^{(1)} $ \nl (by definition of model and by c)). \nl
Hence, ${\tt p}_{\mathfrak T_{\mathfrak {a_t}}} (^\mathcal
M{\tt Hol}_{\mathfrak T_{\mathfrak {a_t}}} (\mathcal K\mathbf
a_t\mathcal K\mathbf b_t \mathbf f)) = 1$ and
$\vDash_{(\mathcal M,\mathfrak
       T_{\mathfrak {a_t}})}\mathcal K\mathbf a_t\mathcal
       K\mathbf b_t \mathbf f $.
\nl Notice that
$$\vDash_{(\mathcal M,\mathfrak T_{\mathfrak {a_t}})}\mathcal K\mathbf
a_t\mathcal K\mathbf b_t \mathbf f \,\,\,\nRightarrow \,\,\,
\vDash_{(\mathcal M,\mathfrak T_{\mathfrak {b_t}})}\mathcal
K\mathbf b_t
\mathbf f.$$ In other words, the following situation is
possible:  \
\begin{itemize}
\item According to Alice's truth-perspective, it is true that Alice knows
that Bob knows the {\em Falsity}.
\item However, according to Bob's truth-perspective it is not true that Bob knows
the {\em Falsity}.

\end{itemize}
Roughly, we might  say: Alice knows that Bob is wrong.
However, Bob is not aware of being wrong!

  \end{enumerate}

  The epistemic situations illustrated above seem to reflect
  pretty well some  characteristic limitations of the
  real processes of  acquiring information and knowledge. Owing to
  the limits of epistemic domains, understanding and knowing
  are not generally closed under logical consequence.
   Hence, the unpleasant   phenomenon  of {\em logical
   omniscience\/}
   is here avoided. We have, in
  particular, that knowledge is  not generally closed under
  logical conjunction, as in fact happens in the  case of concrete  memories
  both of human and of artificial intelligence. It is also admitted
  that an agent can understand (or know) a conjunction, without
  being able to understand (to know) its members.
  Such situation, which might appear {\em prima facie\/}
 somewhat ``irrational'', seems to be instead  deeply in agreement with
 our  use of natural languages, where sometimes
 agents show to use correctly and to understand some {\em global\/}
 expressions
 without being able to understand their (meaningful) parts.

\end{document}